\newtheorem{theo}{Theorem}[section]
\newtheorem{prop}[theo]{Proposition}
\newtheorem{defi}[theo]{Definition}
\newenvironment{proof}{\indent{\em Proof.~}}
{\hfill{$\Box$\bigskip\newline}}
\newcommand{\EQ}{\begin{equation}}
\newcommand{\EN}{\end{equation}}
\newcommand{\wt}{\operatorname{wt}}
\newcommand{\ba}{{\bf a}}
\newcommand{\bb}{{\bf b}}
\newcommand{\bc}{{\bf c}}
\newcommand{\bh}{{\bf h}}
\newcommand{\bv}{{\bf v}}
\newcommand{\br}{{\bf r}}
\newcommand{\bx}{{\bf x}}
\newcommand{\by}{{\bf y}}
\newcommand{\bz}{{\bf z}}
\newcommand{\bo}{{\bf 0}}
\newcommand{\F}{\mathbb{F}}
\title{On binary quadratic symmetric bent and almost bent functions\thanks{This work has been partially
        supported by the Spanish grant TIN2016-77918-P (AEI/FEDER, UE).
        The research of the second author was carried out at the IITP RAS at the expense of the Russian Fundamental Research Foundation (project No. 19-01-00364)).
        \newline \indent $^1$Department of
        Information and Communications Engineering, Universitat
        Aut\`{o}noma de Barcelona, 08193-Bellaterra, Spain.
        (email:~josep.rifa@uab.cat) \newline \indent $^2$A. A. Harkevich Institute for Problems of Information
        Transmission, Russian Academy of Sciences, Bol'shoi Karetnyi
        per. 19, GSP-4, Moscow, 101447, Russia. (email: zinov@iitp.ru)}}
\author{J. Rif\`{a}$^1$, V. A. Zinoviev$^2$}
\begin{document}

\maketitle

\begin{abstract}
We give a new simple construction for known binary quadratic symmetric
bent and almost bent functions. In particular, for even number of variables,
they are self-dual and
anti-self-dual quadratic bent functions, respectively, which are not
of the Maiorana-McFarland type, but affine equivalent to it.
\end{abstract}

\textbf{Keywords}
Bent function, almost bent function, quadratic function, self-dual bent
function, anti-self-dual bent function.

\textbf{Mathematics Subject Classification (2000)} 94C30; 94C10

\section{Introduction}\label{sec1}

Let $\F_q$ be the Galois field of order $q = 2^m$. We use the
standard notation $[n,k,d]$ for a binary linear code $C$ of length
$n$, dimension $k$ and minimum (Hamming) distance $d$. Denote by
$\wt(\bx)$ the Hamming weight of a vector $\bx$ from $\F_2^n$.
For $\bx, \by \in \F_2^n$ denote by $\bx \cdot \by$ the usual inner
product over $\F_2$. Denote by $\bar{\bx}$ the complementary
vector of $\bx$ (obtained by swapping $0$ and $1$).

A {\em boolean function} $f$ in $m$ variables is any map from
$\F_2^m$ to $\F_2$. The {\em weight} of a boolean function $f$
denoted by $\wt(f)$ is the Hamming weight of the binary vector of
the values of $f$, i.e., the number of $\bx \in \F_2^m$ such that
$f(\bx) = 1$. For any boolean function $f$ we define its
Walsh-Hadamard transform $F$, such that \EQ\label{eq:1.1}
F(\ba)~=~\sum_{\bx \in
\F_2^m}\left(-1\right)^{f(\bx)+\ba\cdot\bx},\;\;\forall \,\ba \in
\F_2^m. \EN

\begin{defi}
For even $m$, a boolean function $f$ over $\F_2^m$ is
\textit{bent} if its Walsh-Hadamard transform is $F(\by) =
\pm 2^{m/2}$, for all $\by\in \F_2^m$. For odd $m$, a boolean function
$f$ over $\F_2^m$ is
\textit{almost bent} if its Walsh-Hadamard transform
$F(\by) \in \{0, \pm 2^{(m-1)/2}\}$, for all $\by\in \F_2^m$.
\end{defi}

The {\em nonlinearity} $\mathcal{N}(f)$ of a boolean function $f$ is
its minimum Hamming distance to the set of affine functions, i.e. the value
$\min \{d(f(\bx), \ba \bx+\epsilon)\}$ for all $\ba\in \F_2^m$ and $\epsilon\in \F$.

Two boolean functions $f,g$ are \textit{affine equivalent} if there
exists a linear map $A\in GL(m,\F_2)$, $\bb,\bc \in \F_2^m$ and
$\epsilon\in \F_2$ such that$g(\bx)=f(A(\bx)+\bb)+\bc{\cdot }\bx+\epsilon$,
where $\bc{\cdot }\bx$ is the inner product of $\bc$ and $\bx$.

It is well known that the Walsh-Hadamard transform $F$ of a bent function
$f$ defines a new bent function $\tilde{f}$, such that
$F(\by) = 2^{m/2} (-1)^{\tilde{f}(\by)}$. The function $\tilde{f}$ is
called the \textit{dual} of $f$ and it is fulfilled that $\tilde{\tilde{f}} = f$.
We take the following definitions from [3].

\begin{defi}{[6]}\label{defi:2.1}
A bent function $f$ is called \textit{self-dual}, if it is equal
to its dual. It is called \textit{anti-self-dual}, if it is equal
to the complement of its dual.
\end{defi}

In this quoted paper [1], all self-dual bent functions in $m \le 6$
variables and all quadratic such functions in $m \le 8$ variables
are characterized, up to a restricted form of affine equivalence.
Later, Hou [2] classified all self-dual and anti-self-dual quadratic
bent functions under the action of the orthogonal group $O(m,\F_2)$.

A general class of bent functions is the \emph{Maiorana-McFarland}
class (see [1] and references there), that is, functions of the form
\EQ\label{eq:1.2}
\bx \cdot \varphi(\by) + g(\by),
\EN
where $\bx,\by$ are binary vectors of length $m/2$, $\varphi$ is any
permutation in $\F_2^{m/2}$, and $g$ is an arbitrary boolean function.

One of the interesting class of bent functions are symmetric
functions. An $m$-variable boolean function $f(x)$,~$x=(x_1,\ldots, x_m)$
is called symmetric if $f(x)$ has the same value for any
permutations of variables $x_1,\ldots, x_m$.
Savicky [3] proved that, if $f(x)$ is a symmetric
function of an even number $m$ of variables then the following
statements are equivalent:
\begin{itemize}
\item The function $f$ is a bent function.
\item There exist constants $c,d \in \{0,1\}$ such that for all
$x=(x_1, \ldots, x_m)$
\[
f(x) = \sum_{i<j} x_ix_j + \sum_{i}cx_i + d,
\]
\end{itemize}

This class of quadratic symmetric bent functions was considered in many papers
(see [4 - 7] and references there). It is worth to mention here one result
due to Carlet (see Th. 6 in [4]): \textit{the symmetric functions on $\F_2^m$
which achieve nonlinearity $2^{m-1}-2^{(m-2)/2}$ (i.e. bent) if $m$ is even and
$2^{m-1}-2^{(m-1)/2}$ (i.e. almost bent) if $m$ is odd, are the four
quadratic functions}:
\[
\begin{array}{cccccc}
f_1(x) &=& \binom{\wt(x)}{2} & & &\pmod{2}\\
f_2(x) &=& \binom{\wt(x)}{2} &+ &1& \pmod{2}\\
f_3(x) &=& \binom{\wt(x)}{2} &+ &\wt(x)& \pmod{2}\\
f_4(x) &=& \binom{\wt(x)}{2} &+ &\wt(x)&+ 1 \pmod{2}.\\
\end{array}
\]

One of the open questions concerning self-dual bent and anti-self-dual
bent functions is the following one, quoted in [1]: \emph{are there
quadratic self-dual (anti-self-dual) bent functions which are not of
Maiorana-McFarland type?}

The main result of this note is to give a new description of
symmetric quadratic functions which achieve maximal nonlinearity. This description is simpler than the one previously given by Carlet [4].
Also we give infinite families of self-dual
and anti-self-dual quadratic bent functions which are not of the
Maiorana-McFarland type (but, affine equivalent to it).
The material is organized as follows.
In Section~\ref{sec2} we give some notations and
preliminary results. Section~\ref{sec3} contains the new constructions of
symmetric quadratic  functions and, finally, the
quadratic bent functions which are not of the Maiorana-McFarland
type are given.

\section{Preliminary results}\label{sec2}

Let $m\ge 2$ be an integer and $j \in\{0,1,2,3\}$. Denote by $S(j)_m$
the following sum:
\EQ\label{eq:2.2} S(j)_m~=~\sum_{k =
0,\ldots,m:~k \equiv j \pmod{4}}\binom{m}{k}.
\EN
Denote by
$S(i_1,i_2)_m$ the value $S(i_1,i_2)_m = S(i_1)_m + S(i_2)_m$, for
any two different $i_1$ and $i_2$  from $\{0,1,2,3\}$. The next
proposition was used in [8]
and gives all the values $S(j)_m$ and, hence, the values
of all the sums  $S(i_1,i_2)_m$, which we will use later.
We remark that there are other interesting approaches to reach
the same result like, for instance, in [9] using
Krawtchouk polynomials.

\begin{prop}{[8]}\label{calculS}
For any $m\ge 2$ denote $B_m=2^{s-1}$, where $s=\lfloor
\frac{m}{2}\rfloor$.  Then, the values of $S(j)_m$  depending on
$m \equiv 0,1,2,3,4,5,6,7 \pmod{8}$ are, respectively:

\resizebox{\columnwidth}{!}{
\hspace*{-1cm}\begin{tabular}{ccc|c|c|c|c|c|c|c}
$S(0)_m $&$=$&$ B_m^2+B_m $&$2B_m^2+B_m $&$B_m^2     $&$2B_m^2-B_m $&$B_m^2-B_m $&$2B_m^2-B_m $&$B_m^2     $&$2B_m^2+B_m$\\
$S(1)_m $&$=$&$ B_m^2     $&$2B_m^2+B_m $&$B_m^2+B_m $&$2B_m^2+B_m $&$B_m^2     $&$2B_m^2-B_m $&$B_m^2-B_m $&$2B_m^2-B_m$\\
$S(2)_m $&$=$& $B_m^2-B_m $&$2B_m^2-B_m $&$B_m^2     $&$2B_m^2+B_m $&$B_m^2+B_m $&$2B_m^2+B_m $&$B_m^2     $&$2B_m^2-B_m$\\
$S(3)_m $&$=$&$ B_m^2     $&$2B_m^2-B_m $&$B_m^2-B_m $&$2B_m^2-B_m $&$B_m^2     $&$2B_m^2+B_m $&$B_m^2+B_m $&$2B_m^2+B_m$
\end{tabular}}
\end{prop}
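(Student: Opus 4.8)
The plan is to compute $S(j)_m$ in closed form by a roots-of-unity filter and then read off the eight residue classes of $m$ modulo $8$. Let $\zeta$ be a primitive fourth root of unity, so $\zeta^2=-1$. Since $\frac14\sum_{t=0}^{3}\zeta^{t(k-j)}$ equals $1$ when $k\equiv j\pmod 4$ and $0$ otherwise, the binomial theorem gives
\[
S(j)_m=\frac14\sum_{t=0}^{3}\zeta^{-tj}\,(1+\zeta^t)^m .
\]
First I would evaluate the four terms. The term $t=0$ contributes $2^m$, and the term $t=2$ vanishes because $(1+\zeta^2)^m=0$ for $m\ge 1$. The terms $t=1$ and $t=3$ are complex conjugates: writing $1+\zeta=\sqrt2\,e^{\pi\sqrt{-1}/4}$ and $1+\zeta^3=\sqrt2\,e^{-\pi\sqrt{-1}/4}$ and using $\zeta^{-3j}=\zeta^{j}$, they combine into $2^{m/2+1}\cos\!\big(\tfrac{m\pi}{4}-\tfrac{j\pi}{2}\big)$. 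Hence
\[
S(j)_m=2^{m-2}+2^{m/2-1}\cos\!\Big(\frac{m\pi}{4}-\frac{j\pi}{2}\Big).
\]

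Next I would rewrite the two powers of $2$ in terms of $B_m=2^{s-1}$ with $s=\lfloor m/2\rfloor$, distinguishing the parity of $m$. For $m$ even, $m=2s$, one has $2^{m-2}=B_m^2$ and $2^{m/2-1}=B_m$, so $S(j)_m=B_m^2+B_m\cos(\cdot)$, where the cosine argument is an integer multiple of $\pi/2$ and the cosine therefore equals $0$ or $\pm1$; this already produces the entries $B_m^2$ and $B_m^2\pm B_m$ in the even columns. For $m$ odd, $m=2s+1$, one has $2^{m-2}=2B_m^2$ and $2^{m/2-1}=\sqrt2\,B_m$, while the cosine argument is an odd multiple of $\pi/4$ and the cosine equals $\pm 1/\sqrt2$; the two factors $\sqrt2$ cancel, leaving $S(j)_m=2B_m^2\pm B_m$, matching the odd columns.

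Finally I would tabulate $\cos\!\big(\tfrac{m\pi}{4}-\tfrac{j\pi}{2}\big)$ for the eight residues $m\equiv0,1,\dots,7\pmod 8$ and the four values $j\in\{0,1,2,3\}$; recording the resulting signs reproduces line by line the four rows of the proposition. The main obstacle is purely one of bookkeeping: one must evaluate the cosine correctly over all $32$ combinations and, in the odd case, keep track of the cancellation of $\sqrt2$ so that each entry comes out as the stated integer combination of $B_m^2$ and $B_m$. An alternative, more elementary route avoids the cosine entirely: from Pascal's rule one gets the circular recurrence $S(j)_m=S(j)_{m-1}+S(j-1)_{m-1}$ with indices read modulo $4$, and since $B_m$ doubles exactly once every two steps, a short induction on $m$ propagates the table through one full period of length $8$, the base cases being checked directly for small $m$.
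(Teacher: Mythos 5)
Your argument is correct, but note that the paper itself offers no proof of this proposition: it is imported verbatim from the reference \cite{brz}, so there is nothing internal to compare against. Your roots-of-unity filter is the standard and cleanest route: the identity $S(j)_m=\frac14\sum_{t=0}^{3}\zeta^{-tj}(1+\zeta^t)^m$ is applied correctly, the $t=2$ term indeed vanishes, the conjugate pair combines to $2^{m/2+1}\cos\bigl(\frac{m\pi}{4}-\frac{j\pi}{2}\bigr)$, and the resulting closed form $S(j)_m=2^{m-2}+2^{m/2-1}\cos\bigl(\frac{m\pi}{4}-\frac{j\pi}{2}\bigr)$ together with the parity split $2^{m-2}=B_m^2$ or $2B_m^2$ and $2^{m/2-1}=B_m$ or $\sqrt2\,B_m$ reproduces every entry of the table (I spot-checked several columns, e.g.\ $m\equiv 0$ gives cosine values $1,0,-1,0$ for $j=0,1,2,3$, matching the first column, and small cases $m=2,3,5,8$ agree numerically). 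The remaining work is, as you say, pure bookkeeping over the $32$ sign combinations, and the Pascal-recurrence alternative $S(j)_m=S(j)_{m-1}+S(j-1)_{m-1}$ (indices mod $4$) you sketch is an equally valid, more elementary fallback. The only caveat is that the bookkeeping step should actually be carried out (or at least the eight cosine columns written down) before the proof is considered complete.
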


Denote by $H_m$ the binary matrix of size $m\times 2^m$, where the
columns are all different binary vectors of length $m$ and
denote by $\mathcal{H}_m$ the Hadamard $[2^m,m,2^{m-1}]$-code,
generated by $H_m$. Assume that, for any $m\geq 3$, $H_m$ is
constructed using the following recursive way: \EQ\label{eq:2.3}
H_{m+1}=\left[
\begin{array}{c|c}
H_m\,&\,H_m\\
0\ldots 0\,&\,1\ldots 1
\end{array}
\right]. \EN

For a given $m \geq 3$ and any  $i_1, i_2 \in \{0,1,2,3\}$, where
$i_1 \neq i_2$, denote by $\bv_{i_1,i_2}(m)$ or simply $\bv_{i_1,i_2}$
the binary vector $\bv_{i_1,i_2}=(v_0,v_1, ..., v_{n-1})$ whose $j$-th position
$v_j$ is a function of the value of weight of the column $\bh_j$
in $H_m$:
\[
v_j~=~\left\{
\begin{array}{ccc}
1, ~~&\mbox{if}&~~\wt(\bh_j) \equiv i_1~\mbox{or}~ i_2 \pmod{4}\\
0, ~~&  &  ~~\mbox{otherwise}.
\end{array}
\right.
\]

\begin{prop}\label{weights}
The weight distribution of the coset $\mathcal{H}_m+\bv_{i_1,i_2}$ is
$$
\left\{
\begin{array}{lll}
\{2^{m-1}\pm~2^{\frac{m-2}{2}}\} & \mbox{ if and only if} & \mbox{$m$ is even and } i_1-i_2 \equiv 1 \pmod{2},\\
\{2^{m-1}, 2^{m-1}\pm 2^{\frac{m-1}{2}}\} & \mbox{ if and only if} & \mbox{$m$ is odd and } i_1-i_2 \equiv 1 \pmod{2}.
\end{array}
\right.
$$
\end{prop}
\begin{proof}
For even $m$ it follows  from Proposition~\ref{calculS} and, in
particular, can also be found in [8].

We explain the case of odd $m$. Let $m=2u+1$.  By definition
of $\bv_{i_1,i_2}(m)$ we have \EQ\label{eq:2.4}
\bv_{i_1,i_2}(2u+1) =
(\bv_{i_1,i_2}(2u)\,|\,\bv_{i_1-1,i_2-1}(2u)), \EN where
subtraction of indices is modulo $4$. Denote by $\br^{(i)}$ the
$i$-th row of $H_m$, where $i=1,\ldots, m$ and let
\[
\br = \sum_{s=1}^m c_i\br^{(i)},\;\;c_i\in \F_2
\]
be an arbitrary codeword of $\mathcal{H}_m$.
Now from (\ref{eq:2.4}), we deduce
\[
d(\bv_{i_1,i_2}(2u+1), \br)= \left\{
\begin{array}{ccc}
2 d(\bv_{i_1,i_2}(2u),
\br),\;&\mbox{if}&\;c_m=0,\\
2^{2u},\;&\mbox{if}&\;c_m=1.
\end{array}
\right.
\]
For $m=2u$ we have [8] for the case $i_1-i_2 \equiv 1 \pmod{2}$
\[
d(\bv_{i_1,i_2}(2u), \br) \in \{2^{2u-1} \pm 2^{(2u-2)/2}\}.
\]
Since $2(2^{2u-1} \pm 2^{u-1}) = 2^{2u} \pm 2^u = 2^{m-1}\pm 2^{\frac{m-1}{2}}$,
we can deduce that only for the case $i_1-i_2 \equiv 1
\pmod{2}$, the weight distribution of the coset
$\mathcal{H}_m+\bv_{i_1,i_2}$ is $\{2^{m-1}, 2^{m-1}\pm 2^{\frac{m-1}{2}}\}$.
Indeed, the case $i_1-i_2 \equiv 2 \pmod 2$ gives, as possible values
of $d(\bv_{i_1,i_2}(2u), \br)$ [8], either $\{0,2^{m-1}\}$, when
$\{i_1,i_2\}=\{1,3\}$ or $\{2^{m-1},2^m\}$,
when $\{i_1,i_2\}=\{0,2\}$.
\end{proof}

\section{A new construction of quadratic symmetric bent\\
and almost bent functions}\label{sec3}

For any  $i_1, i_2 \in \{0,1,2,3\}$, where $i_1 \neq i_2$, define
the boolean function  $f_{i_1,i_2}$ over $\F_2^{m}$ as:
$$
f_{i_1,i_2}(\bx)=~\left\{
\begin{array}{ccc}
1, ~~&\mbox{if}&~~\wt(\bx) \equiv i_1~\mbox{or}~ i_2 \pmod{4}\\
0, ~~&  &  ~~\mbox{otherwise}.
\end{array}
\right.
$$

Now we state one of the main results of the present paper.

\begin{theo}\label{theo:3.1}
The function $f_{i_1,i_2}$ is quadratic and
$$
\left\{\begin{array}{lll}
\mbox{bent} & \mbox{if and only if} &\mbox{$m$ is even}; m \geq 4 \mbox{ and } i_1 - i_2 \equiv 1 \pmod{2},\\
\mbox{almost bent} & \mbox{if and only if} &\mbox{$m$ is odd}; m \geq 3 \mbox{ and } i_1 - i_2 \equiv 1 \pmod{2}.
\end{array}
\right.
$$
\end{theo}

\begin{proof}
From Proposition~\ref{weights} the function  $f_{i_1,i_2}$ is bent if and only if $i_1 - i_2 \equiv 1 \pmod{2}$. Now, we show that all these bent functions are quadratic. Specifically,
\begin{align*}
f_{2,3}(x_1,x_2,\ldots,x_m) &\equiv\sum_{1\leq i<j\leq m} x_ix_j \pmod{2}\\
f_{1,2}(x_1,x_2,\ldots,x_m) &\equiv \sum_{1\leq i\leq m} x_i + \sum_{1\leq i<j\leq m} x_ix_j\pmod{2}\\
f_{0,1}(x_1,x_2,\ldots,x_m) &\equiv 1+f_{2,3}(x_1,x_2,\ldots,x_m)\pmod{2}\\
f_{0,3}(x_1,x_2,\ldots,x_m) &\equiv 1+f_{1,2}(x_1,x_2,\ldots,x_m)\pmod{2}\\
\end{align*}
To prove the first equality is equivalent to prove that $\binom{w}{2} \equiv 1 \pmod{2}$ if and only if $w\equiv \{2,3\} \pmod{4}$, where $w$ is an integer number, $0\leq w \leq m$. Hence, it is enough to prove that for $w\in\{0,1,2,3\}$, we have $\binom{w}{2} \equiv 1 \pmod{2}$ if and only if $w\in \{2,3\}$, which is clear.

The second equality is reduced to prove that  for $w\in\{0,1,2,3\}$, we have $w+\binom{w}{2} \equiv 1 \pmod{2}$ if and only if $w\in \{1,2\}$, which is also clear.

The last two equalities come from the two first.

Following [5, Ch. 15], we can write a quadratic boolean function as
$f(\bx)= \bx Q \bx^T + L\bx +\epsilon$ and, for the above functions we have:
\begin{equation}\label{theo:2.2}
\begin{split}
f_{2,3}(\bx) &= \bx Q \bx^T, \\
f_{1,2}(\bx) &= \bx Q \bx^T + L\bx^T,\\
f_{0,1}(\bx) &= \bx Q \bx^T + \epsilon,\\
f_{0,3}(\bx) &= \bx Q \bx^T + L\bx^T +\epsilon,
\end{split}
\end{equation}
where $Q$ is the all ones upper triangular binary $m\times m$ matrix with zeroes in
the diagonal, $L$ is the all ones binary vector of length $m$ and $\epsilon = 1$.
\end{proof}

\begin{prop}\label{prop:3.2}
Bent functions $f_{i_1,i_2}$, where $i_1 - i_2 \equiv 1 \pmod{2}$ are affine
equivalent to each other. Bent function $f_{0,1}$ is complementary to $f_{2,3}$,
as well as $f_{0,3}$ is complementary to $f_{1,2}$.
\end{prop}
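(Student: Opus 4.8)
The plan is to read off both assertions directly from the explicit quadratic representations established in the proof of Theorem~\ref{theo:3.1}, in particular from the formulas collected in \eqref{theo:2.2}. The key observation is that the four bent functions $f_{2,3}$, $f_{1,2}$, $f_{0,1}$, $f_{0,3}$ share the same quadratic part $\bx Q \bx^T$ and differ only in their affine part:
\[
f_{2,3}(\bx)=\bx Q \bx^T,\quad f_{1,2}(\bx)=\bx Q \bx^T + L\bx^T,\quad f_{0,1}(\bx)=\bx Q \bx^T + 1,\quad f_{0,3}(\bx)=\bx Q \bx^T + L\bx^T + 1.
\]
This common quadratic core is what drives the whole argument.

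For the affine equivalence I would take $f_{2,3}$ as a reference function and, for each of the other three, exhibit explicit data $(A,\bb,\bc,\epsilon)$ witnessing equivalence in the sense of the definition $g(\bx)=f(A\bx+\bb)+\bc\cdot\bx+\epsilon$. Since each of the four functions differs from $f_{2,3}$ only by the linear form $L\bx^T$ and/or the constant $1$, one can simply take $A=I$ and $\bb=\bo$ throughout and absorb the difference into $\bc$ and $\epsilon$: the choice $\bc=L$ with $\epsilon=0$ yields $f_{1,2}$; the choice $\bc=\bo$ with $\epsilon=1$ yields $f_{0,1}$; and $\bc=L$ with $\epsilon=1$ yields $f_{0,3}$. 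As affine equivalence is an equivalence relation, these three equivalences with $f_{2,3}$ imply that all four functions are affine equivalent to one another.

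For the complementary relations I would argue even more directly from the definition of the functions. The residue classes $0,1,2,3\pmod{4}$ partition every integer weight $w$ with $0\le w\le m$, so the supports of $f_{0,1}$ and $f_{2,3}$ partition $\F_2^m$, and likewise those of $f_{0,3}$ and $f_{1,2}$. Hence $f_{0,1}=1+f_{2,3}$ and $f_{0,3}=1+f_{1,2}$, which is exactly the statement that each function is the complement of the other. Equivalently, this is immediate from \eqref{theo:2.2}, where $f_{0,1}$ and $f_{0,3}$ are obtained from $f_{2,3}$ and $f_{1,2}$ respectively by adding the constant $\epsilon=1$.

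I do not anticipate a genuine obstacle here, because the substantive work — producing the explicit quadratic forms with identical quadratic part — was already carried out in Theorem~\ref{theo:3.1}. The only point that requires a little care is to recall that adding a linear form and a constant is a bona fide special case of affine equivalence (namely the case $A=I$, $\bb=\bo$), so that the equivalences above are legitimate instances of the definition rather than some weaker relation.
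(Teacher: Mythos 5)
Your proposal is correct and follows exactly the paper's route: the paper's proof is the single line ``Straightforward from Equation~(\ref{theo:2.2})'', and you have simply spelled out why that equation gives both claims (common quadratic part, differences absorbed into $\bc$ and $\epsilon$ with $A=I$, $\bb=\bo$). No gaps.
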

\begin{proof}
Straightforward from Equation~(\ref{theo:2.2}).
\end{proof}

Notice that the function $f_{0,1}$ and its complementary $f_{2,3}$ has
been constructed in [10] in terms of abelian difference sets, known also
as Menon difference sets (see [11]).

\begin{theo}~

For $m \equiv 0, 4 \pmod{8}$, the functions $f_{i_1,i_2}$ are neither self-dual functions nor
anti-self-dual.

For $m\equiv 2 \pmod{8}$, $f_{2,3}$ and $f_{0,1}$ are self-dual. The function $f_{0,3}$ is
anti-self-dual (with $f_{1,2}$).

For $m\equiv 6 \pmod{8}$, $f_{1,2}$ and $f_{0,3}$ are self-dual. The function $f_{2,3}$ is
anti-self-dual (with $f_{0,1}$).
\end{theo}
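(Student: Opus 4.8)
The plan is to compute the dual $\tilde{f}_{2,3}$ explicitly and to obtain the other three from it by the elementary transformation rules for duals. From the normal forms in~(\ref{theo:2.2}) the four bent functions differ from $g:=f_{2,3}(\bx)=\bx Q\bx^T$ only by the constant $\epsilon=1$ and/or the all-ones linear form $L\bx^T$. Since adding a constant gives $\widetilde{g+\epsilon}=\tilde{g}+\epsilon$ and adding a linear form gives $\widetilde{g+L\bx^{T}}(\ba)=\tilde{g}(\ba+L)=\tilde{g}(\bar{\ba})$ (because $L$ is the all-ones vector, so $\ba+L=\bar{\ba}$), once $\tilde{g}$ is known the status of $f_{0,1}=g+1$, $f_{1,2}=g+L\bx^{T}$ and $f_{0,3}=g+L\bx^{T}+1$ follows mechanically.

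To find $\tilde{g}$ I would complete the square. Let $M=Q+Q^{T}$ be the matrix of the associated bilinear form $B(\bx,\by)=g(\bx+\by)+g(\bx)+g(\by)$; here $M=I+J$ with $J$ the all-ones matrix, so $M$ has zero diagonal and ones off-diagonal. Because $m$ is even, $J^{2}=mJ\equiv 0$, whence $M^{2}=I$, so $M$ is nonsingular with $M^{-1}=M$ over $\F_2$ (reconfirming bentness). Let $\bc$ be the unique vector with $M\bc^{T}=\ba^{T}$, i.e. $\bc=\ba M=\ba+(\wt(\ba)\bmod 2)\,L$. Substituting $\bx\mapsto\bx+\bc$ in $F(\ba)=\sum_{\bx}(-1)^{g(\bx)+\ba\cdot\bx}$ cancels the linear-in-$\bx$ part and yields $\tilde{g}(\ba)=g(\bc)+\ba\cdot\bc+\tilde{g}(0)$. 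One checks $\ba\cdot\bc=\ba M\ba^{T}=0$ (a symmetric zero-diagonal form over $\F_2$), and that $\bc=\ba$ when $\wt(\ba)$ is even and $\bc=\bar{\ba}$ when $\wt(\ba)$ is odd.

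It remains to evaluate $g(\bc)$ and the constant $\tilde{g}(0)$. Using $g(\bx)\equiv\binom{\wt(\bx)}{2}\pmod 2$ (established in the proof of Theorem~\ref{theo:3.1}) together with the identity $\binom{m-w}{2}\equiv\binom{w}{2}+\tfrac{m}{2}+w\pmod 2$, I obtain $\tilde{g}(\ba)=g(\ba)+\tilde{g}(0)$ for $\wt(\ba)$ even and $\tilde{g}(\ba)=g(\ba)+(\tfrac{m}{2}+1)+\tilde{g}(0)$ for $\wt(\ba)$ odd. The constant is read off from the sign of $F(0)=\sum_{w}\binom{m}{w}(-1)^{\binom{w}{2}}=(S(0)_m+S(1)_m)-(S(2)_m+S(3)_m)$; by Proposition~\ref{calculS} this equals $+2^{m/2}$ for $m\equiv 0,2$ and $-2^{m/2}$ for $m\equiv 4,6\pmod 8$, so $\tilde{g}(0)=0$ and $\tilde{g}(0)=1$ respectively.

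Finally I read off the classification. When $\tfrac{m}{2}$ is even ($m\equiv 0,4\pmod 8$) the even- and odd-weight formulas for $\tilde{g}$ differ by the nonconstant term $\tfrac{m}{2}+1=1$, so $\tilde{g}$ is neither $g$ nor $g+1$; hence $g=f_{2,3}$, and then by the transformation rules each $f_{i_1,i_2}$, is neither self-dual nor anti-self-dual. When $\tfrac{m}{2}$ is odd ($m\equiv 2,6\pmod 8$) that term vanishes and $\tilde{g}=g+\tilde{g}(0)$ globally, so $f_{2,3}$ is self-dual for $m\equiv 2$ and anti-self-dual for $m\equiv 6$; propagating this through $\widetilde{g+\epsilon}=\tilde{g}+\epsilon$ and $\widetilde{g+L\bx^{T}}(\ba)=\tilde{g}(\bar{\ba})$ then yields the stated status of $f_{0,1},f_{1,2},f_{0,3}$ in each residue class. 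The main obstacle is purely the bookkeeping: keeping the weight-parity case split consistent through the binomial identity, and matching the correct columns of Proposition~\ref{calculS} to the sign of $F(0)$.
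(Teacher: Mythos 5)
Your proof is correct, but it takes a genuinely different route from the paper's. The paper does not compute the dual at all: it invokes Hou's necessary-and-sufficient criterion (that $(Q+Q^T)^2=I$ and $(Q+Q^T)Q(Q+Q^T)+Q^T$ is alternating) to establish that the $f_{i_1,i_2}$ are self-dual or anti-self-dual precisely when $m\not\equiv 0\pmod 4$, and then distinguishes the two cases by a single evaluation, comparing $f_{i_1,i_2}(\bo)$ with $\tilde{f}_{i_1,i_2}(\bo)$ via $F_{i_1,i_2}(\bo)=2^m-2\wt(f_{i_1,i_2})$ and Proposition~\ref{calculS}. You instead compute $\tilde{g}$ for $g=f_{2,3}$ explicitly by completing the square with $M=Q+Q^T=J+I$, $M^{-1}=M$, obtaining $\tilde{g}(\ba)=g(\ba M)+\tilde{g}(\bo)$ together with the binomial identity $\binom{m-w}{2}\equiv\binom{w}{2}+\tfrac{m}{2}+w\pmod 2$, and then propagate through the constant and linear shifts; I have checked the case analysis and it reproduces the stated table in all four residue classes, including the failure for $m\equiv 0,4\pmod 8$ (where the even/odd-weight formulas for $\tilde g$ differ by $1$, so $\tilde g$ is neither $g$ nor $g+1$). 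What your approach buys is self-containedness and more information: it avoids citing Hou's classification as a black box and actually exhibits the dual function, which also makes the paper's shortcut of testing only the coordinate at $\bo$ unnecessary (the paper must first know duality holds before that single test is conclusive). What the paper's approach buys is brevity: the alternating-matrix computation disposes of $m\equiv 0\pmod 4$ in a few lines, and the remaining bookkeeping reduces to one weight computation per pair $(i_1,i_2)$ rather than your full weight-parity case split.
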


\begin{proof}
It is known [1, Th. 4.1] that if $f$ is  a self-dual bent or anti-self-dual bent
quadratic boolean function then the symplectic matrix $Q+Q^T$ associated to $f$
(\ref{theo:2.2}) is an involution, hence $(Q+Q^T)^2 = I$ ($I$ is the identity
matrix of order $m$). Later, Hou [2, Th. 2.1], extended this property to a
necessary and sufficient condition in the sense that $f$ is  self-dual or
anti-self-dual if and only if $(Q + Q^T)^2= I$, and $(Q + Q^T)Q(Q+Q^T) + Q^T$ is
an alternating matrix (so, a matrix of the form $A+A^T$, where $A$ is a square
matrix over $\F_2$).

In all cases of our functions $f_{i_1,i_2}$ the symplectic matrix $Q+Q^T$ coincides
with $J+I$, where $J$ is the $m\times m$ matrix with ones in all the entries and
$Q$ is the all ones upper triangular binary $m\times m$ matrix with zeroes in the
diagonal~(\ref{theo:2.2}).

Hence, $(Q+Q^T)^2=(J+I)^2= J^2 + I = I$ (the last equality is true since $m$ is even).

For the second condition we have
\begin{eqnarray*}
&&(Q + Q^T)Q(Q+Q^T) + Q^T \\
&=& (J+I)Q(J+I) + Q^T \\
&=& JQJ +JQ + QJ +Q + Q^T \\
&=&\binom{m}{2}J +JQ + QJ +Q + Q^T.
\end{eqnarray*}
Taking into account that $m$ is even we see that $\binom{m}{2}J$ is the zero matrix
if and only if $m\equiv 0\pmod{4}$. In all other cases  $\binom{m}{2}J =J$.
We also have $JQ + QJ = (a_{ij})$, where $a_{ij} = 1$ when $i+j$ is even and
$a_{ij}=0$ when $i+j$ is odd. Therefore,  $JQ + QJ = A+A^T+I$, for some $A$.

Finally,
\begin{eqnarray*}
&&(Q + Q^T)Q(Q+Q^T) + Q^T \\
&=&\binom{m}{2}J+A+A^T+I +Q+Q^T\\
&=&\binom{m}{2}J+(A+Q)+(A+Q)^T+I,
\end{eqnarray*}
which is an alternating matrix if and only if $m\not\equiv 0\pmod{4}$ and so, we
conclude that $f_{i_1,i_2}$ is a self-dual or anti-self-dual quadratic bent
function if and only if $m\not\equiv 0\pmod{4}$.

Now that we know in what cases $f_{i_1,i_2}$ is a self-dual or anti-self-dual
quadratic bent function we can check the self-duality or anti-self-duality
condition, for each pair $(i_1,i_2)$. It is not necessary to check that the
dual bent function $\tilde{f}_{i_1,i_2}$ coincides with $f_{i_1,i_2}$ or with
the complement $\bar{f}_{i_1,i_2}$, it is enough to check if the first
coordinate in $f_{i_1,i_2}$ coincides with the first coordinate in $\tilde{f}_{i_1,i_2}$
to decide about self-duality or anti-self-duality.

Let us begin by computing
\[F_{i_1,i_2}(\bo)= \sum_{\bx\in \F_2^m} (-1)^{f_{i_1,i_2}(\bx)+\bo\cdot\bx} =
\sum_{\bx\in \F_2^m} (-1)^{f_{i_1,i_2}(\bx)}= 2^m -2\wt(f_{i_1,i_2})
\]
and note that $\wt(f_{i_1,i_2})$ depends on the value of the pair $(i_1,i_2)$.

Using Proposition~\ref{calculS} we have the following values for $\wt(f_{i_1,i_2}) = S(i_1)_m+S(i_2)_m$, where $B_m = 2^{(m-2)/2}$ :
$$
\begin{array}{|c|c|c|}\hline
\wt(f_{i_1,i_2}) & m\equiv 2  \pmod{8} & m\equiv 6  \pmod{8} \\\hline
(i_1,i_2) = (0,1) & 2B^2_m+B_m& 2B^2_m-B_m\\\hline
(i_1,i_2) = (2,3) &2B^2_m-B_m &2B^2_m+B_m \\\hline
(i_1,i_2) = (0,3) & 2B^2_m-B_m& 2B^2_m+B_m\\\hline
(i_1,i_2) = (1,2) & 2B^2_m+B_m& 2B^2_m-B_m \\\hline
\end{array}
$$

With the above results we can compute $F_{i_1,i_2}(\bo)$, which is always $\pm 2^{m/2}$,
and also $\tilde{f}_{i_1,i_2}(\bo)$, which is in $\{0,1\}$ depending on the value
of $F_{i_1,i_2}(\bo)$.

Now we put in the following tables the values of $\tilde{f}_{i_1,i_2}(\bo)$ and
$f_{i_1,i_2}(\bo)$. When these values coincide we conclude that $f_{i_1,i_2}$ is
self-dual, otherwise $f_{i_1,i_2}$ is anti-self-dual.
\bigskip

\resizebox{\columnwidth}{!}{
$
\begin{array}{|c|c|c||}\hline
 \tilde{f}_{i_1,i_2}(\bo) & m\equiv 2  \pmod{8} & m\equiv 6  \pmod{8} \\\hline
(i_1,i_2) = (0,1) & 1& 0\\\hline
(i_1,i_2) = (2,3) &0&1 \\\hline
(i_1,i_2) = (0,3) & 0&1\\\hline
(i_1,i_2) = (1,2) & 1&0 \\\hline
\end{array}
\begin{array}{||c|c|c|}\hline
f_{i_1,i_2}(\bo) & m\equiv 2  \pmod{8} & m\equiv 6  \pmod{8} \\\hline
(i_1,i_2) = (0,1) & 1& 1\\\hline
(i_1,i_2) = (2,3) &0&0 \\\hline
(i_1,i_2) = (0,3) & 1&1\\\hline
(i_1,i_2) = (1,2) & 0&0 \\\hline
\end{array}
$
}
\end{proof}
\begin{theo}\label{theo:3.3}
For any even $m \geq 4$ the bent functions $f_{i_1,i_2}$, where
$i_1 - i_2 \equiv 1 \pmod{2}$ are not of the
Maiorana-McFarland type.
\end{theo}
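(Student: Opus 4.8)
The plan is to prove the two assertions separately and, by Proposition~\ref{prop:3.2}, to reduce the work to the single representative $f_{2,3}$, since all the $f_{i_1,i_2}$ with $i_1-i_2\equiv 1\pmod 2$ are affine equivalent to one another.

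For the negative part I would argue from the shape of the quadratic part alone. In a Maiorana--McFarland function $\bx\cdot\varphi(\by)+g(\by)$, with $\bx$ the first $m/2$ coordinates and $\by$ the last $m/2$, every degree-two monomial is either a cross term $x_iy_j$ or lives entirely in $\by$; in particular no monomial $x_ix_j$ with $i,j\le m/2$ can occur. By Theorem~\ref{theo:3.1} and Equation~(\ref{theo:2.2}) each $f_{i_1,i_2}$ has quadratic part $\sum_{1\le i<j\le m}x_ix_j$, which contains $x_1x_2$; as $m\ge 4$ gives $m/2\ge 2$, this is a forbidden within-block product, so no $f_{i_1,i_2}$ is of Maiorana--McFarland type. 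Moreover the same count shows the obstruction is insensitive to a repartition of the coordinates into two halves: every pair $x_ix_j$ occurs, so any block of $m/2\ge 2$ coordinates chosen for $\bx$ already carries a product.

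For the positive part I would invoke the classification of quadratic forms over $\F_2$. The symplectic matrix of $f_{2,3}$ is $Q+Q^T=J+I$, which the preceding proof shows to be non-degenerate, since $(J+I)^2=I$. Every non-degenerate alternating matrix over $\F_2$ of even order is congruent to the standard symplectic form $\left(\begin{smallmatrix}0&I\\ I&0\end{smallmatrix}\right)$, so there is $A\in GL(m,\F_2)$ realizing this congruence. Transporting $f_{2,3}$ by the corresponding change of variables, the bilinear part becomes $\sum_{i=1}^{m/2}x_ix_{m/2+i}$, and the surviving diagonal terms, being of the form $x_i^2=x_i$, together with any accumulated linear terms make up an affine function; deleting that affine part using the freedom in the definition of affine equivalence, $f_{2,3}$ is affine equivalent to $\bx\cdot\by=\sum_{i=1}^{m/2}x_ix_{m/2+i}$. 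This last function is Maiorana--McFarland, with $\varphi$ the identity permutation and $g=0$. Equivalently, one may simply appeal to the standard fact (see~\cite{mac}) that every quadratic bent function is affine equivalent to $\bx\cdot\by$.

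The symplectic normal form is routine, so the points that need care are bookkeeping and interpretation: one must track the diagonal and linear remainder so that, if one prefers to exhibit the Maiorana--McFarland form without first deleting the affine part, the resulting $\varphi$ is a genuine permutation (an affine bijection $\by\mapsto\by+\ba$), and one should be explicit about the intended meaning of ``Maiorana--McFarland type'' (the fixed first/second-half coordinate split). The main point is conceptual rather than computational: being of Maiorana--McFarland type is detected by the vanishing of a coordinate block of the quadratic part, whereas affine equivalence sees only the congruence class of the symplectic form, and for $f_{i_1,i_2}$ these two invariants pull in opposite directions, so the function is genuinely non-Maiorana--McFarland yet equivalent to one.
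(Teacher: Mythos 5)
Your proof is correct, but it takes a genuinely different route from the paper on both halves. For the impossibility part, the paper argues combinatorially: it fixes $\by_0$ with $\varphi(\by_0)=\bo$, so that $f_{i_1,i_2}$ is constant on the coset $\{(\bx,\by_0)\}$, and observes that $\wt(\bz)$ then runs through $m/2+1\geq 3$ consecutive values, hence through more than two residues mod $4$, contradicting the weight-based definition of $f_{i_1,i_2}$. You instead read the obstruction off the algebraic normal form: a Maiorana--McFarland function $\bx\cdot\varphi(\by)+g(\by)$ contains no monomial $x_ix_j$ with both indices in the $\bx$-block, whereas the quadratic part $\sum_{i<j}x_ix_j$ from Equation~(\ref{theo:2.2}) contains all such pairs, for any partition of the coordinates into two halves. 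Both arguments are valid and both are insensitive to which $m/2$ coordinates play the role of $\bx$; yours leans on the computed ANF from Theorem~\ref{theo:3.1}, the paper's on the raw combinatorial definition. For the positive part, the paper exhibits an explicit invertible substitution ($y_i=x_i+x_{i+1}$ for odd $i$, $y_i=x_i+\cdots+x_m$ for even $i$) and verifies $y_1y_2+\cdots+y_{m-1}y_m=f_{2,3}+x_2+x_4+\cdots+x_m$ directly, while you invoke the symplectic normal form (Dickson's theorem, as in~\cite{mac}), using the non-degeneracy $(J+I)^2=I$ already established in the paper; your route is shorter but non-constructive, the paper's gives the equivalence explicitly. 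One small caution: your opening sentence proposes to ``reduce to $f_{2,3}$ by Proposition~\ref{prop:3.2},'' but membership in the Maiorana--McFarland class is precisely \emph{not} an affine invariant (that is the content of the theorem), so that reduction is only legitimate for the affine-equivalence half; fortunately your non-membership argument does not actually use it, since it addresses the common quadratic part of all four functions directly.
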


\begin{proof}
Let us prove the statement by using contradiction. Assume that
$f_{i_1,i_2}$ is of Maiorana-McFarland type. This means that a binary
variable vector $\bz$, of length $m$, can be divided into
two subvectors $\bx$ and $\by$ of the same length $m/2$ such that
\[
f_{i_1,i_2}(\bz)=\bx\cdot\varphi(\by) + g(\by),
\]
where $\varphi$ is a permutation of $\F_2^{m/2}$ and $g(\by)$ is some
boolean function. Note that $\bx$ and $\by$ run over all the $2^{m/2} \cdot 2^{m/2}$ possible
values. Consider the set of values of $f_{i_1,i_2}(\bz)$ when $\bx$ runs over all
the values in  $\F_2^{m/2}$ and $\by$ is fixed to be $\by=\by_0$, such that $\varphi(\by_0)$
is the zero vector. In this case $x\cdot\varphi(\by_0)=0$ and
$f_{i_1,i_2}(\bz)=g(\by_0)$ is a constant. Since $\bx$ is running over $\F_2^{m/2}$ its
weight takes $m/2+1>2$ different consecutive values. Hence, it takes more
than $2$ different values of weight modulo $4$ and, for all these values,
the function $f_{i_1,i_2}(\bz)$ is constant. This contradicts the definition
of the function $f_{i_1,i_2}$.
\end{proof}

\textbf{Remark 1. } Altough the functions in Theorem~\ref{theo:3.3} are not of the
Maiorana-McFarland type, they are affine equivalent to it [12, Ch. 14,]. Note that
this fact concerning the function $f_{2,3}$  was mentioned in [7].

\end{document}